\newtheorem{prop}{Proposition}
\def\L2e{{\cal L}_{2e}}
\def\begequarr{\begin{eqnarray}}
\def\endequarr{\end{eqnarray}}
\def\begequarrs{\begin{eqnarray*}}
	\def\endequarrs{\end{eqnarray*}}
\def\begarr{\begin{array}}
	\def\endarr{\end{array}}
\def\begequ{\begin{equation}}
\def\endequ{\end{equation}}
\def\begdes{\begin{description}}
	\def\enddes{\end{description}}
\def\begenu{\begin{enumerate}}
	\def\begite{\begin{itemize}}
		\def\endite{\end{itemize}}
	\def\endenu{\end{enumerate}}
\def\lef[{\left[\begin{array}}
	\def\rig]{\end{array}\right]}
\def\begcen{\begin{center}}
	\def\endcen{\end{center}}
\def\ss{\sum_{i=1}^n}
\newtheorem{remark}{Remark}
\title{\LARGE \bf
	Frequency Estimation of Multi-Sinusoidal Signals in Finite-Time*
}
\author{Anastasiia Vediakova$^{1}$, Alexey Vedyakov$^{2}$, Anton Pyrkin$^{2}$, Alexey Bobtsov$^{2}$, Vladislav Gromov$^{2}$ 
	\thanks{*This work was not supported by any organization}
	\thanks{$^{1}$Anastasiia Vediakova is with Department of Computer Applications and Systems, Saint Petersburg State University, Universitetskaya nab., 7/9, 199034, Saint Petersburg, Russia
		{\tt\small vediakova@gmail.com}}%
	\thanks{$^{2}$Alexey Vedyakov, Alexey Bobtsov, Anton Pyrkin and Vladislav Gromov are with the Faculty of Control Systems and Robotics, ITMO University, Kronverksky av., 49, 197101, Saint Petersburg, Russia
		{\tt\small vedyakov@itmo.ru, bobtsov@mail.ru, pyrkin@itmo.ru, 	gromov@corp.ifmo.ru}}%
}
\begin{document}
	
	\maketitle
	\thispagestyle{empty}
	\pagestyle{empty}

	\begin{abstract}
		This paper considers the problem of frequency estimation for a multi-sinusoidal signal consisting of $n$ sinuses in finite-time. The parameterization approach based on applying delay operators to a measurable signal is used. The result is the $n$th order linear regression model with $n$ parameters, which depends on the signals frequencies. We propose to use Dynamic Regressor Extension and Mixing method to replace $n$th order regression model with $n$ first-order regression models. Then the standard gradient descent method is used to estimate separately for each the regression model parameter. On the next step using algebraic equations finite-time frequency estimate is found. The described method does not require measuring or calculating derivatives of the input signal, and uses only the signal measurement. The efficiency of the proposed approach is demonstrated through the set of numerical simulations. 
		
		\textit{Keyword} --- online frequency estimation, continuous-time estimation, finite-time estimator
	\end{abstract}

\section{Introduction}

\label{sec:introduction}

Online frequency estimation for a signal composed of a single or multiple sinusoids is a fundamental theoretical problem. It is widely presented in many practical applications, for example, in vibration control and disturbance rejection systems \cite{Pyrkin2016Compensation}, in precise positioning systems for nanotechnology \cite{nano2008}, in dynamic positioning systems for vessels under external disturbances such as waves, winds and currents \cite{waves2007}. In power systems frequency estimation is used for load balancing, fault detection, enhancing power quality \cite{Xia2012, Phan2016, vedyakov2017globally}. The sensorless speed estimation approach based on online frequency estimation is proposed in \cite{roque2014}.

For known frequencies, estimation of a bias, amplitudes, and phases is a simple linear regression problem. Frequencies appear nonlinearly, and their estimation is much more complicated. This paper focuses on this part. Using estimated frequency values, one can estimate other parameters in a cascaded manner.

The problem is intensively studied in signal processing, instrumentation and measurements, and adaptive control. However, these studies mainly consider different aspects of the problem \cite{Hou2012}. Online estimation adds adaptive properties to the control. Global convergence of the estimates to the true values helps to guarantee the stability of the closed-loop system.

There are relevant results in discrete-time, but the global convergence problem in discrete-time is meaningless due to the critical dependence of the estimate on the sampling time~\cite{Ortega1999}. For example, in \cite{Hasan2013} the authors are assuming that the frequencies are not close to 0 or $\pi$.

In power systems, phase-locked-loop (PLL) algorithms are typically used for frequency estimation, and an extension of the PLL technique is described in \cite{Karimi-Ghartemani2004}, \cite{Karimi-Ghartemani2012}. However, as it is mentioned in \cite{Pin2017}, only local convergence can be proved for these methods.

The advanced results for multi-sinusoidal signals in continuous-time provide global exponential convergence \cite{marino2014}, \cite{pyrkin2015Estimation} and finite-time convergence \cite{chen2016multi}. Another interesting observer-based method is described in~\cite{chen2018observer}, where the number of sinusoids in the measurable signal is supposed to be time-varying.

In this paper, the previously proposed sinusoidal signal frequency estimation method~\cite{med2017sin} is extended to unbiased multi-sinusoidal signals. A solution is based on parameterization with delay operators, which also was used to estimate frequencies of non-stationary sinusoidal signals~\cite{2017ejc}.
After parameterization linear regression model is obtained. Using Dynamic Regressor Extension and Mixing method (DREM)~\cite{Aranovskiy2016}, we split this model into scalar regressions. This step improves transition behavior and allows to obtain estimates at a predefined finite time using a scheme described in~\cite{ortega2019fto}.

The paper is organized as follows. Section~\ref{sec:statement} formulates the problem. Section~\ref{sec:parametrization} presents the parameterization to obtain a linear regression model for a single sinusoid and for a multi-sinusoidal signal. Transformation from the $n$th-order regression to $n$ first-order regression models using DREM is described in Section~\ref{sec:DREM}. Section~\ref{sec:estimator} introduces finite-time estimation schemes for each parameter. In Section~\ref{sec:example} simulation results are presented illustrating the approach efficiency.


\section{Problem statement}
\label{sec:statement}

Consider a measurable multi-sinusoidal signal: 
\begin{align}
	\label{eq:y}
	y(t) =& \ss A_i\sin\left(\omega_i t+\phi_i\right),
\end{align}
where $\omega_i \in \mathbb{R}_+$ are frequencies $\omega_i \neq \omega_j$, $i, j=\overline{1,n}$, $\phi_i$ are phases, $A_i \in \mathbb{R}_+$ are amplitudes, $n$ is the number of harmonics in the signal $y(t)$. The parameters $\omega_i$, $\phi_i$ and $A_i$ are unknown.

The objective is to find the estimates $\hat{\omega}_i(t)$ of the frequencies $\omega_i$ that provide convergence of the errors $\tilde{\omega}_i(t) = \omega_i - \hat{\omega}_i(t)$ to zero at the predefined finite time $t_{ft} > 0$, {\it i.e.}
\begin{align}
\label{eq:objective}
| \tilde{\omega}_i(t) | = 0, \quad
\text{ for } t \geq t_{ft}. 
\end{align}

Our basic assumption is the following.

\noindent {\bf Assumption A1.} 
The lower and upper bounds on the signal frequencies $\omega_i$ are known and equal to $\underline{\omega}$ and $\overline{\omega}$, where~$\underline{\omega}~>~0$.



\section{Parameterization}
\label{sec:parametrization}
In this section we aim to find a linear regression model with measurable variables and a constant vector depending on the signal frequencies.

We introduce an $h$-second delay operator:
\begin{align}
\label{eq:delay_operator}
[Z(\cdot)](t) = \begin{cases}
0, & t < h,\\
(\cdot)(t-h), & t \geq h,
\end{cases}
\end{align}
where $h \in \mathbb{R}_{+}$ is a chosen delay value.

Signals with multiple delays can be represented using this delay operator, as $y(t-kh) = Z^k y(t)$, $k = \overline{1,2n}$.

\begin{remark}
	The delay $h$ is chosen such that
	\begin{align}
	\label{eq:h}
	h < \frac{\pi}{2\overline{\omega}}.
	\end{align}
\end{remark}

Let us find an expression, where the signal $y(t)$ with $n$ harmonics is expressed via $2n$ delayed signals $y(t-h)$, $\ldots$, $y(t-2nh)$. 


\subsection{The case of a single harmonic}
Consider a measurable signal for the case with $n=1$:
\begin{align}
	\label{eq:one_sin}
	y(t) =& A_1\sin\left(\omega_1 t+\phi_1\right).
\end{align}

It is well known~\cite{gromov2017first} that the signal \eqref{eq:one_sin} can be represented as a linear regression model:
\begin{align}
	\label{eq:rel1}
	y(t-2h) + y(t) - 2c_1y(t-h) &= 0,
\end{align}
or with the delay operator \eqref{eq:delay_operator}:
\begin{align}
	\label{eq:z1}
	\left[Z^2 + 1 - 2c_1 Z\right] y(t) &= 0,
\end{align}
where $c_1 := \cos{\omega_1 h}$ and $t \geq 2h$.

Then, the regression model can be written as follows
\begin{align}
\label{eq:regmodel1}
\psi(t) &= \varphi(t) \theta,
\end{align}
where  $\psi(t) \in \mathbb{R}$ is a regressand, $\varphi(t) \in \mathbb{R}$ is a regressor, $\theta \in \mathbb{R}$ is an unknown parameter:
\begin{align}
\label{eq:regmodel1_psi}
\psi(t) &= -y(t-2h) - y(t), \\
\label{eq:regmodel1_phi}	
\varphi(t) &= 2y(t-h), \\
\label{eq:regmodel1_theta}	
\theta &= -c_1.
\end{align}


\subsection{The general case}
Here we construct a regression model for general case~\eqref{eq:y} with $n$ sinusoids.

\begin{prop}
	\label{pr:prop1}
	For any $n \in \mathbb{N}$ the following relation holds
	\begin{align}
		\label{eq:zn}
		&\left[Z^2 + 1 - 2Zc_1\right]\cdot \ldots \cdot \left[Z^2 + 1 - 2Zc_n\right] y^{(n)}(t) = 0,
	\end{align}
	 where $c_i:=\cos \omega_i t$ are constants, $i=\overline{1,n}$, $Z$~is the delay operator~\eqref{eq:delay_operator}, the upper index in the brackets $(n)$ denotes the number of harmonics in the signal~\eqref{eq:y}. 
\end{prop}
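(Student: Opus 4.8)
The plan is to proceed by induction on $n$, the number of harmonics. The base case $n=1$ is exactly the already-established identity \eqref{eq:z1}: a single sinusoid $y^{(1)}(t) = A_1 \sin(\omega_1 t + \phi_1)$ is annihilated by the operator $Z^2 + 1 - 2Z c_1$, which follows from the trigonometric identity $\sin(\alpha - \beta) + \sin(\alpha + \beta) = 2\cos\beta \sin\alpha$ applied with $\alpha = \omega_1 t + \phi_1$ and $\beta = \omega_1 h$ (note this is where the delay operator's ``$0$ for $t<h$'' branch and the restriction $t \geq 2nh$ matter, so the identity holds for $t$ large enough).

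For the inductive step, suppose the claim holds for signals with $n-1$ harmonics. Write the $n$-harmonic signal as $y^{(n)}(t) = y^{(n-1)}(t) + A_n \sin(\omega_n t + \phi_n)$, where $y^{(n-1)}$ collects the first $n-1$ sinusoids. The key observation is that each factor $Z^2 + 1 - 2Z c_i$ is a linear operator, and all these operators commute with one another (they are polynomials in the single operator $Z$, acting on a common domain of signals defined for $t \geq 2nh$). Apply the factor $Z^2 + 1 - 2Z c_n$ to $y^{(n)}(t)$ first: by the base-case computation it annihilates the $n$-th sinusoid, leaving $\left[Z^2 + 1 - 2Z c_n\right] y^{(n)}(t) = \left[Z^2 + 1 - 2Z c_n\right] y^{(n-1)}(t)$. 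Crucially, the right-hand side is again a sum of $n-1$ sinusoids with the same frequencies $\omega_1, \dots, \omega_{n-1}$ (applying a fixed linear combination of delays to a sinusoid of frequency $\omega_i$ returns a sinusoid of the same frequency, possibly with altered amplitude and phase). Hence the inductive hypothesis applies: the product $\left[Z^2 + 1 - 2Z c_1\right] \cdots \left[Z^2 + 1 - 2Z c_{n-1}\right]$ annihilates it. Composing, and using commutativity to reorder the factors into the form stated in \eqref{eq:zn}, gives the result.

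I expect the main obstacle to be purely bookkeeping rather than conceptual: one must be careful that the delay operator $Z$ as defined in \eqref{eq:delay_operator} is only a clean ``shift'' for $t \geq h$ (and compositions $Z^k$ only for $t \geq kh$), so every identity above is asserted on the interval $t \geq 2nh$, consistent with the implicit domain in the statement. A secondary point worth spelling out is the lemma ``a finite-tap delay filter maps a pure sinusoid to a pure sinusoid of the same frequency,'' which is what keeps the reduced signal $\left[Z^2 + 1 - 2Z c_n\right] y^{(n-1)}(t)$ within the class to which the inductive hypothesis applies; this is immediate from writing each $\sin(\omega_i(t-kh)+\phi_i)$ via angle-addition and collecting $\sin\omega_i t$ and $\cos\omega_i t$ terms. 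The hypothesis $h < \pi/(2\overline\omega)$ from the Remark is not needed for the algebraic identity \eqref{eq:zn} itself — it only guarantees later that the resulting regressor is nondegenerate (i.e. $c_i \neq \pm 1$ and the $c_i$ remain distinct), so I would not invoke it in this proof.
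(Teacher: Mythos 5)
Your proof is correct and follows essentially the same route as the paper's: induction on the number of harmonics, peeling off the last sinusoid $A_n\sin(\omega_n t+\phi_n)$ and using the single-harmonic annihilator from \eqref{eq:z1} together with linearity and commutativity of the delay-polynomial factors. The only cosmetic difference is the order of application --- you apply $\left[Z^2+1-2Zc_n\right]$ first and then invoke the inductive hypothesis on the filtered $(n-1)$-harmonic signal, whereas the paper applies the first $k$ factors first and the new factor last --- which is immaterial since all factors are polynomials in $Z$ and commute.
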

\begin{proof}
	We give a proof by induction on $n$. 
	
	\textit{Base case}: The proposition is true for $n = 1$ according to~\eqref{eq:z1}.
	
	\textit{Inductive step}: Show that for any $k > 1$, if the equation~\eqref{eq:zn} for $y^{(k)}(t)$ holds, then for $y^{(k+1)}(t)$ also holds. This can be done as follows.
	
	Assume the induction hypothesis that the equation~\eqref{eq:zn} for $y^{(k)}(t)$ holds. It must then be shown that the equation~\eqref{eq:zn} for $y^{(k+1)}(t)$ holds. Let us express $y^{(k+1)}(t)$ with $y^{(k)}(t)$:
	\begin{align}
	\label{eq:yk}
	y^{(k+1)}(t) =& \sum_{i=1}^{k+1} A_i\sin\left(\omega_i t+\phi_i\right)
	\nonumber\\
	=& \sum_{i=1}^{k} A_i\sin\left(\omega_i t+\phi_i\right) + A_{k+1}\sin\left(\omega_{k+1} t+\phi_{k+1}\right)
	\nonumber\\
	=& y^{(k)}(t) + A_{k+1}\sin\left(\omega_{k+1} t+\phi_{k+1}\right).
	\end{align}
	Using the induction hypothesis, the operator
	\begin{align*}
		Z^{(k)}(\cdot)=\begin{cases}
			0, & t < 2kh,\\
			\left[Z^2 + 1 - 2Zc_1\right]\cdot \ldots \cdot \left[Z^2 + 1 - 2Zc_k\right](\cdot), & t \geq 2kh,
		\end{cases}
	\end{align*}
	can be applied to the equation~\eqref{eq:yk}: 
	\begin{align}
	&\left[Z^2 + 1 - 2Zc_1\right]\cdot \ldots \cdot \left[Z^2 + 1 - 2Zc_k\right]y^{(k+1)}(t) \nonumber\\
	&\quad = \underbrace{\left[Z^2 + 1 - 2Zc_1\right]\cdot \ldots \cdot \left[Z^2 + 1 - 2Zc_k\right]y^{(k)}(t)}_{0}
	\nonumber\\
	&\qquad + \left[Z^2 + 1 - 2Zc_1\right]\cdot \ldots \cdot \left[Z^2 + 1 - 2Zc_k\right]\alpha_{k+1}(t),
	\label{eq:yk1}
	\end{align}
	where $\alpha_{k+1}(t):= A_{k+1}\sin\left(\omega_{k+1} t+\phi_{k+1}\right)$, $t \geq 2kh$.
	
	Note, the function $\alpha_{k+1}(t)$ is the signal $y^{(1)}(t)$ with parameter indexes equal $k+1$. From the base case we obtain
	\begin{align}
		\label{eq:zk1}
		\left[Z^2 + 1 - 2Zc_{k+1}\right] \alpha_{k+1}(t) &= 0, \quad t \geq 2h.
	\end{align}
	
	Applying the operator $\left[Z^2 + 1 - 2Zc_{k+1}\right](\cdot)$ to \eqref{eq:yk1} yields:
	\begin{align}
	&\left[Z^2 + 1 - 2Zc_1\right]\cdot \ldots \cdot \left[Z^2 + 1 - 2Zc_{k+1}\right]y^{(k+1)}(t) \nonumber\\
	&\quad = \left[Z^2 + 1 - 2Zc_1\right]\cdot \ldots \cdot \underbrace{\left[Z^2 + 1 - 2Zc_{k+1}\right]\alpha_{k+1}(t)}_0,
	\nonumber\\
	& \left[Z^2 + 1 - 2Zc_1\right]\cdot \ldots \cdot \left[Z^2 + 1 - 2Zc_{k+1}\right]y^{(k+1)}(t) = 0,
	\end{align}
	which shows that the proposition is true for $y^{(k+1)}(t)$.
	
	Since both the base case and the inductive step have been performed, by mathematical induction the statement holds for all natural numbers $n$. This completes the proof.
\end{proof}

We are now in a position to construct from~\eqref{eq:zn} the regression model for the general case as
\begin{align}
\label{eq:regmodeln}
\psi(t) &= \varphi^\top(t) \theta,
\end{align}
where
\begin{align}
\theta^\top &= \begin{bmatrix}
\theta_1 & \theta_2 & \ldots & \theta_n
\end{bmatrix},\\
\varphi^\top(t) &= \begin{bmatrix}
\varphi_1(t) & \varphi_2(t) & \ldots & \varphi_n(t)
\end{bmatrix},
\end{align}
or more specifically
\begin{align}
	\label{eq:zn2}
	\left[Z^2 + 1\right]^ny(t) = \theta_1 \varphi_1(t)+\theta_2 \varphi_2(t)+\ldots+\theta_n \varphi_n(t).
\end{align}

The regressand $\psi(t)$ is found using Newton's binomial:
\begin{align}
\label{eq:psi_n}
\psi(t) = \left[Z^2 + 1\right]^n y(t)&=\sum_{i=0}^n C_n^i Z^{2(n-i)}y(t) \nonumber\\ 
&=\sum_{i=0}^n C_n^i y(t-2h(n-i)),
\end{align}
where $C_n^i=\frac{n!}{i!(n-i)!}$.

Components of the vector of unknown parameters $\theta$ are related to $c_{i}$ via Vieta's formulas:
\begin{align}
\label{eq:theta_n}
\theta_1 &= -c_1 - c_2 -\ldots - c_n, \\
\theta_2 &= c_1 c_2 + c_1 c_3 +\ldots + c_{n-1} c_n, \\
&\cdots \nonumber\\
\theta_n &= (-1)^n c_1 c_2 \cdot \ldots \cdot c_n.
\end{align}

Components of the regressor $\varphi(t)$ are the following:
\begin{align}
	\label{eq:phi_n}
	\varphi_1(t) &= 2Z\left[Z^2+1\right]^{n-1}y(t)
	= 2Z\sum_{i=0}^{n-1} C_{n-1}^i Z^{2(n-i)}y(t)
	\nonumber \\
	&= 2\sum_{i=0}^{n-1} C_{n-1}^i y(t-h(2(n-i)-1)), \\
	\varphi_2(t) &= 2^2 Z^2\left[Z^2+1\right]^{n-2}y(t)=2^2 Z^2\sum_{i=0}^{n-2} C_{n-2}^i Z^{2(n-i)}y(t)
	\nonumber \\
	&= 2^2\sum_{i=0}^{n-2} C_{n-2}^i y(t-2h(n-i-1)), \\
	&\cdots \nonumber\\
	\varphi_n(t) &= 2^n Z^ny(t) = 2^n y(t-nh). \label{eq:phi_nn}
\end{align}

\begin{prop}
	\label{pr:phi}
	The regressor components $\varphi_1(t)$, $\dots$, $\varphi_n(t)$ from~\eqref{eq:phi_n}--\eqref{eq:phi_nn} are linearly independent functions at $t \geq (2n-1)h$.
\end{prop}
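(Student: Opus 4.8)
The plan is to show that the only way a linear combination $\sum_{k=1}^n \lambda_k \varphi_k(t)$ can vanish identically (for $t \geq (2n-1)h$) is $\lambda_1 = \cdots = \lambda_n = 0$. The key structural observation is that each $\varphi_k(t)$ is, up to the scalar factor $2^k$, of the form $Z^k [Z^2+1]^{n-k} y(t)$, i.e.\ the signal $y$ passed through a polynomial in the delay operator $Z$; explicitly $\varphi_k(t) = [P_k(Z) y](t)$ with $P_k(z) = 2^k z^k (z^2+1)^{n-k}$. So a linear dependence would give $[Q(Z)y](t) = 0$ for $t$ large, where $Q(z) = \sum_k \lambda_k 2^k z^k (z^2+1)^{n-k}$ is a polynomial of degree $\le 2n$ (actually exactly $2n$ unless all $\lambda_k$ vanish, since the top-degree term $z^{2n}$ of each summand has coefficient $\lambda_k 2^k$, giving total coefficient $\sum_k \lambda_k 2^k$ — hmm, that can cancel, so I would instead track the lowest-degree behavior or argue via the next paragraph).

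First I would set up the delay-polynomial framework: writing $\varphi_k(t) = [P_k(Z)y](t)$, a nontrivial relation $\sum_k \lambda_k \varphi_k \equiv 0$ on $[(2n-1)h,\infty)$ means $[Q(Z)y](t) \equiv 0$ there. Next I would exploit that $y(t) = \sum_{i=1}^n A_i \sin(\omega_i t + \phi_i)$ is a sum of $n$ distinct sinusoids, and that applying $Z$ to $\sin(\omega_i t + \phi_i)$ just produces $\sin(\omega_i(t-h)+\phi_i)$, a sinusoid of the same frequency. For each $i$, write $A_i\sin(\omega_i t+\phi_i)$ as a combination of $e^{\pm j\omega_i t}$; then $[Q(Z)y](t) = \sum_i \big( a_i Q(e^{-j\omega_i h}) e^{j\omega_i t} + \bar a_i Q(e^{j\omega_i h}) e^{-j\omega_i t}\big)$. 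Since the $2n$ exponentials $e^{\pm j\omega_i t}$ are linearly independent functions of $t$ (the $\omega_i$ are distinct and positive), $[Q(Z)y]\equiv 0$ forces $Q(e^{j\omega_i h}) = 0$ for all $i$, i.e.\ $Q$ must vanish at all $2n$ points $e^{\pm j\omega_i h}$, which are distinct because Remark~1 guarantees $\omega_i h < \pi/2$ so the angles $\pm\omega_i h$ are distinct in $(-\pi/2,\pi/2)$.

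Then I would finish by a degree count: $Q(z) = \sum_{k=1}^n \lambda_k 2^k z^k (z^2+1)^{n-k}$ is divisible by $z$ (every term has a factor $z^k$ with $k\ge 1$), so write $Q(z) = z\,\tilde Q(z)$ with $\deg \tilde Q \le 2n-1$. The $2n$ roots $e^{\pm j\omega_i h}$ are all nonzero, hence they are roots of $\tilde Q$; but a polynomial of degree $\le 2n-1$ with $2n$ distinct roots must be identically zero, so $Q \equiv 0$. Finally, expanding $Q$ in the basis $\{z^k(z^2+1)^{n-k}\}_{k=1}^n$: these $n$ polynomials are linearly independent (e.g.\ $z^k(z^2+1)^{n-k}$ has lowest-degree term $z^k$ with the distinct valuations $k=1,\dots,n$), so $Q\equiv 0$ forces $\lambda_k 2^k = 0$, i.e.\ $\lambda_k = 0$ for all $k$. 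The main obstacle is being careful that the delayed-sinusoid bookkeeping is valid only for $t$ large enough that none of the delay operators is in its "zero" regime — this is exactly why the statement restricts to $t \ge (2n-1)h$, the largest delay appearing among the $\varphi_k$ — and that the distinctness of the evaluation points $e^{\pm j\omega_i h}$ genuinely uses Remark~1; I would state both points explicitly.
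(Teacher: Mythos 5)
Your proof is correct, and it takes a genuinely different route from the paper's. The paper argues in two stages: it first shows by contradiction that the $2n$ delayed copies $y(t-mh)$, $m=\overline{1,2n}$, are linearly independent, reducing this to a $2n\times 2n$ homogeneous system whose sine/cosine coefficient matrix $A$ is \emph{asserted} to have rank $2n$ ``from constraint~\eqref{eq:h}''; it then observes that the $\varphi_i$ are images of these delayed copies under a full-row-rank trapezoidal matrix $B$. You instead encode each component as $\varphi_k = [P_k(Z)y](t)$ with $P_k(z)=2^k z^k(z^2+1)^{n-k}$, pass to complex exponentials so that a vanishing combination forces $Q=\sum_k\lambda_k P_k$ to vanish at the $2n$ distinct points $e^{\pm j\omega_i h}$ (distinctness being exactly where $\omega_i h<\pi/2$ enters), and kill $Q$ by a degree count, after which the distinct valuations of the $P_k$ at $z=0$ give $\lambda_k=0$. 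What your route buys is that it actually proves the step the paper only asserts: the rank-$2n$ claim for the matrix $A$ is, after complexification, precisely the invertibility of a Vandermonde matrix at the nodes $e^{\pm j\omega_i h}$, i.e.\ the ``polynomial of degree $<2n$ with $2n$ distinct roots is zero'' fact you invoke. Two minor points to clean up: the degree bookkeeping in your first paragraph is off ($\deg P_k = 2n-k\le 2n-1$, not $2n$, so $Q$ itself already has degree at most $2n-1$ and the factorization $Q=z\tilde Q$ is unnecessary, though harmless); and the conclusion $Q(e^{\pm j\omega_i h})=0$ uses that the exponential coefficients $a_i$ are nonzero, which holds because $A_i\in\mathbb{R}_+$ in \eqref{eq:y} and should be said explicitly.
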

\begin{proof}
	The signal~\eqref{eq:y} consist $n$ harmonics and can be represented as a linear combination of $2n$ signals $y(t-mh )$, where $m = \overline{1,2n}$, how it follows from the proposition~\ref{pr:prop1}.
	
	We use  proof by contradiction to show that signals $y(t-mh )$ at $m = \overline{1,2n}$ are linearly independent. Suppose to the contrary, then the coefficients $a_m \in \mathbb{R}$, $m = \overline{1,2n}$ are not equal to zero at one time, i.~e. $a_1^2+\cdots+a_{2n}^2 \neq 0$ such that the linear combination of signals $y(t-h )$, $\dots$, $y(t-2nh)$ is equal to zero at~$\forall t \geq 2nh$:
	
	\begin{align}
		\nonumber	
		&\sum_{m = 1}^{2n} a_m y(t-mh ) = 0, \\
		\nonumber
		&\sum_{m = 1}^{2n} a_m \sum_{i=1}^{n} A_i\sin\big(\omega_i (t-mh ) + \phi_i\big) = 0, \\
		\nonumber
		&\sum_{m = 1}^{2n} a_m \sum_{i=1}^{n} A_i\sin\big(\omega_i t + \phi_i\big)\cos(mh \omega_i) -\\
		\label{eq:sum}
		& \quad \quad -\sum_{m = 1}^{2n} a_m \sum_{i=1}^{n} A_i\cos\big(\omega_i t + \phi_i\big)\sin(mh \omega_i) = 0.
	\end{align}

	The equation~\eqref{eq:sum} must be true for any values of $A_i$, $\omega_i$ and $\phi_i$, therefore the following equalities for~$i=\overline{1,n}$ must be satisfied:
	\begin{align}	
	\label{eq:sys1}
	\sum_{m = 1}^{2n} a_m \cos(mh \omega_i) =& 0,\\
	\label{eq:sys2}
	\sum_{m = 1}^{2n} a_m \sin(mh \omega_i) =& 0,
	\end{align}	 
	or in a matrix form:
	\begin{align}
	\label{eq:matrix}	
	\underbrace{\begin{bmatrix}
		\cos(h \omega_1) & \ldots & \cos(2nh \omega_1) \\
		\vdots & \ddots & \vdots \\
		\cos(h \omega_n) & \ldots & \cos(2nh \omega_n) \\
		\sin(h \omega_1) & \ldots & \sin(2nh \omega_1) \\
		\vdots & \ddots & \vdots \\
		\sin(h \omega_n) & \ldots & \sin(2nh \omega_n) 
		\end{bmatrix} }_A \cdot
	\begin{bmatrix}
	a_1 \\
	\vdots\\
	a_{2n}
	\end{bmatrix} = 
	\begin{bmatrix}
	0 \\
	\vdots\\
	0
	\end{bmatrix}.
	\end{align}	
	
	The columns of the matrix $A$ are linearly independent and $\text{rank}A=2n$, from constraint~\eqref{eq:h}. Consequently, the homogeneous system of linear equations~\eqref{eq:matrix} has a single zero solution $a_m = 0$, at $m=\overline{1,2n}$, which contradicts the assumption that signals $y(t-mh )$ are linearly dependent at $m = \overline{1,2n}$ . 
	
	Note that the regressor components $\varphi_i(t)$, $i=\overline{1,n}$ are represented as linear combinations of signals $y(t-mh )$, $m = \overline{1,2n-1}$: 
	\begin{align*}
	\begin{bmatrix}
	\varphi_1(t) \\
	\varphi_2(t) \\
	\varphi_3(t) \\
	\vdots \\
	\varphi_n(t)
	\end{bmatrix} &= B
	\cdot
	\begin{bmatrix}
	y(t-h ) \\
	y(t-2h ) \\
	y(t-3h ) \\
	\vdots \\
	y(t-(2n-1)h ) 
	\end{bmatrix},
	\end{align*}	
	where $B$~---~trapezoidal matrix $n \times 2n-1$, $\text{rank}B = n$:
	\begin{align*}
	B = \begin{bmatrix}
	1 & 0 & C_{n-1}^{1}	& 0 & C_{n-1}^{2} & \ldots & 0 & \ldots\\
	0 & 1 & 0 & C_{n-2}^{1}	& 0 & \ldots & 1 & \ldots\\
	0 & 0 & 1 & 0 & C_{n-3}^{1}	& \ldots & 0 & \ldots\\
	\vdots & \vdots & \vdots & \vdots & \vdots & \ldots & \vdots & \ldots \\
	0 & 0 & 0 & 0 & 0 & \ldots & 1 & \ldots\\
	\end{bmatrix}.
	\end{align*}
	
	Consequently, the regressor components are linearly independent functions at $t \geq (2n-1)h$.
\end{proof}


\section{Regression model decomposition}
\label{sec:DREM}

In the previous section, regression model \eqref{eq:regmodeln} is constructed. In this section DREM method~\cite{Aranovskiy2016} is used to obtain $n$ separate first order linear regression models.

Following DREM procedure we introduce new delay operator similarly to~\eqref{eq:delay_operator}:
\begin{align}
\label{eq:filter_delay}
[H_d(\cdot)](t) = \begin{cases}
0, & t < d,\\
(\cdot)(t-d), & t \geq d,
\end{cases}
\end{align}
where $d \in \mathbb{R}_{+}$ is the delay.

Let's apply delay operators to linear regression model~\eqref{eq:regmodeln}:
\begin{align}
\label{eq:regmod_delay}
H^i\{\psi(t)\} = H^i\{\varphi(t)\}^T \theta,\quad i = \overline{1,n},
\end{align}
where $H^i\{ \cdot \} = \underbrace{H\{ H \{ \dots \{ H}_i \{\cdot\} \} \dots \} \}$.

Next, let us introduce a few variables: $\psi_{i}(t):=H^i\{\psi(t)\}$, $\Phi_{i}(t):=H^i\{\varphi(t)\}$ and write the extended system from the expressions \eqref{eq:regmodeln} in a matrix form:
\begin{align}
\label{eq:regmod_ext}
\varepsilon \Psi_f(t) =& \varepsilon \Phi_f(t) \theta, \\
\Psi_f(t) = \begin{bmatrix}
\psi_1(t)\\
\psi_2(t)\\
\ldots \\
\psi_n(t)\\
\end{bmatrix} \in \mathbb{R}^{n},
\quad
\label{eq:phi_extended}
&\Phi_f(t) = \begin{bmatrix}
\Phi_1^T(t)\\
\Phi_2^T(t)\\
\ldots \\
\Phi_n^T(t)
\end{bmatrix} \in \mathbb{R}^{n \times n},
\end{align}
where $\varepsilon~\in~\mathbb{R}_+$ is normalization gain.

At the mixing step of DREM procedure the regression model~\eqref{eq:regmod_ext} multiply to the adjugate matrix $\mbox{adj}\{\Phi_f(t)\}$ and we get
\begin{align}
\label{eq:regmodel_drem}
\Psi(t) =& \Delta(t) \theta, \\
\Psi(t) :=& \mbox{adj}\{\varepsilon \Phi_f(t)\}\varepsilon \Psi_f(t) = \begin{bmatrix}
\Psi_1(t) \\
\Psi_2(t) \\
\ldots \\
\Psi_n(t)
\end{bmatrix}, \\
\label{eq:Delta}
\Delta(t) :=& \mbox{det}\{\varepsilon \Phi_f(t)\},
\end{align}
where $\mbox{adj}\{\cdot\}$ is the adjugate matrix, $\mbox{det}\{\cdot\}$ is the determinant.

Let's rewrite the equation~\eqref{eq:regmodel_drem} in a component way:
\begin{align}
\label{eq:regmod_final}
\Psi_i(t) &=  \Delta(t) \theta_i, \quad i=\overline{1,n}.
\end{align}
where $\Delta(t) \in \mathbb{R}$, $\Psi(t) \in \mathbb{R}$.

Now we can estimate parameters $\theta_i$ from~\eqref{eq:regmod_final} separately.


\section{Finite-time parameter estimation}
\label{sec:estimator}

Parameters estimations of the first order regression model~\eqref{eq:regmod_final} can be obtained using the standard gradient method~\cite{ioannou1996robust}:
\begin{align}
	\label{eq:grad}
	\dot{\hat{\theta}}_i(t) & =\gamma_i\Delta(t)\left(\Psi_i(t)-\Delta(t)\hat{\theta}_i(t)\right),
\end{align}
where $\hat{\theta}_i\in\mathbb{R}$ is estimate of $\theta_i$, $\gamma_i\in\mathbb{R}_{+}$
is a tuning gain.

Now we can apply the modification for the scalar standard gradient method, which is proposed in~\cite{ortega2019fto}.

The error model for $\tilde{\theta}_i(t) =\theta_i-\hat{\theta}_i(t)$ is expressed as
\begin{align}
	\label{eq:error_model}
	\dot{\tilde{\theta}}_i(t) & =-\gamma_i \Delta^{2}(t)\tilde{\theta}_i(t).
\end{align}

One can easily find the solution for~\eqref{eq:error_model}:
\begin{align}
	\label{eq:error_solution}
	\tilde{\theta}_i(t) &=\tilde{\theta}_i(0)\text{e}^{-\gamma_i\int_{0}^{t}\Delta^{2}(r )\text{d} r}.
\end{align}

\begin{prop}
	\label{pr:exp}
	The algorithm~\eqref{eq:grad} provides an exponential convergence of the estimation error~$\tilde{\theta}_i(t)$ to zero at~$i = \overline{1,n}$.
\end{prop}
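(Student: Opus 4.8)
The plan is to read the result straight off the closed-form error trajectory \eqref{eq:error_solution}: for each $i=\overline{1,n}$,
\begin{align*}
|\tilde{\theta}_i(t)| = |\tilde{\theta}_i(0)|\,\text{e}^{-\gamma_i\int_{0}^{t}\Delta^{2}(r)\,\mathrm{d}r},
\end{align*}
so exponential convergence is equivalent to a lower bound $\int_{0}^{t}\Delta^{2}(r)\,\mathrm{d}r\ge \lambda t-\mu$ with $\lambda>0$. Hence the whole argument reduces to showing that the scalar signal $\Delta(t)=\det\{\varepsilon\Phi_f(t)\}$ from \eqref{eq:Delta} is persistently exciting, or more simply that its squared time-average is strictly positive.

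First I would describe the structure of $\Delta$. By \eqref{eq:phi_extended} the entries of $\Phi_f(t)$ are time-delayed copies of the regressor components $\varphi_1,\dots,\varphi_n$, and by \eqref{eq:phi_n}--\eqref{eq:phi_nn} each $\varphi_j$ is a finite linear combination of the delayed measurements $y(t-mh)$. Since $y$ in \eqref{eq:y} is a finite sum of sinusoids of frequencies $\omega_1,\dots,\omega_n$, every entry of $\Phi_f(t)$ is again such a finite sum, and $\Delta(t)$, being the $n\times n$ determinant of those entries, is a trigonometric polynomial in $t$ (a finite sum of harmonics at integer combinations of the $\omega_i$, plus a constant term). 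In particular $\Delta$ is bounded, real-analytic and almost periodic once $t$ exceeds the initial transient, so the mean value $\beta:=\lim_{t\to\infty}\tfrac{1}{t}\int_{0}^{t}\Delta^{2}(r)\,\mathrm{d}r$ exists and satisfies $\beta>0$ if and only if $\Delta\not\equiv 0$.

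The decisive step is therefore $\Delta\not\equiv 0$. I would argue by contradiction: if $\det\{\Phi_f(t)\}\equiv 0$, the rows of $\Phi_f(t)$ --- the vector $\varphi(\cdot)$ delayed by $d,2d,\dots,nd$ --- are linearly dependent as functions of $t$, and, writing $\varphi(t)$ in the real basis $\{\cos\omega_l t,\sin\omega_l t\}_{l=1}^{n}$ and using that the $\omega_i$ are distinct and satisfy Assumption~A1, a resonance (Cauchy--Binet) computation forces a nontrivial linear dependence among $\varphi_1,\dots,\varphi_n$ themselves, contradicting Proposition~\ref{pr:phi}; this requires, as is standard in DREM, that the second delay $d$ be chosen outside a measure-zero set of resonant values (the analogue for $d$ of the constraint \eqref{eq:h} on $h$), and one may equivalently invoke the generic-nonsingularity results for the extended regressor in the DREM literature, noting that by analyticity Proposition~\ref{pr:phi} already yields linear independence of $\varphi_1,\dots,\varphi_n$ on every subinterval. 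Once $\Delta\not\equiv 0$, hence $\beta>0$, combining the limit above with continuity of $t\mapsto\int_{0}^{t}\Delta^{2}$ on compact intervals gives a constant $\mu\ge 0$ with $\int_{0}^{t}\Delta^{2}(r)\,\mathrm{d}r\ge \tfrac{\beta}{2}\,t-\mu$ for all $t\ge 0$; substituting into the displayed solution yields $|\tilde{\theta}_i(t)|\le \text{e}^{\gamma_i\mu}\,|\tilde{\theta}_i(0)|\,\text{e}^{-\gamma_i\frac{\beta}{2}\,t}$, which is the claimed exponential convergence, for each $i=\overline{1,n}$.

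I expect the real obstacle to be precisely the claim $\Delta\not\equiv 0$: propagating the (interval) linear independence of $\varphi_1,\dots,\varphi_n$ from Proposition~\ref{pr:phi} to nonsingularity of the \emph{delayed} extended matrix $\Phi_f(t)$ is the one place where the delay $d$ must be constrained, and a resonant $d$ can make $\Delta$ vanish identically even when the original regressor is exciting. Everything else --- the trigonometric-polynomial structure of $\Delta$, the positivity of its mean square, and the passage from a linear-in-$t$ lower bound on the integral to an exponential estimate --- is routine.
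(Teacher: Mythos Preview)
Your plan coincides with the paper's: both reduce exponential convergence to showing that the scalar $\Delta(t)$ is bounded and persistently exciting, and both lean on the linear independence of $\varphi_1,\dots,\varphi_n$ established in Proposition~\ref{pr:phi}. Two differences are worth flagging. First, you treat $\Delta$ as an almost-periodic trigonometric polynomial and work with the time-average of $\Delta^2$; the paper instead asserts that $\Delta$ is \emph{periodic} with some period $T_0$ and integrates over one period --- a claim that strictly requires the $\omega_i$ to be rationally related, so your almost-periodic formulation is the more generally valid one. Second, and more substantively, you are explicit that the step $\Delta\not\equiv 0$ needs a non-resonance condition on the DREM delay $d$, and you correctly identify this as the crux. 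The paper, by contrast, passes directly from ``the columns of $\Phi_f$ are linearly independent as functions of $t$'' to ``$\mathrm{rank}\,\Phi_f(t)=n$ for all $t>n(h+d)$,'' which is a pointwise statement that does not follow from functional independence and can indeed fail for resonant $d$ even when Proposition~\ref{pr:phi} holds. Your Cauchy--Binet/contradiction outline, together with the caveat on $d$, is therefore a tighter version of the same argument rather than a different one.
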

\begin{proof}
	If the function $\Delta(t)$ is bounded and persistently exciting, then the estimation method~\eqref{eq:grad} provides an exponential convergence of the estimation error $\tilde{\theta}_i(t)$ to zero according to~\cite{ioannou1996robust}. 
	
	The function $\Delta(t)$ is the determinant of the extended regressor matrix~\eqref{eq:phi_extended} and each it element is represented as a linear combination of continuous and limited harmonic functions~\eqref{eq:phi_n}--\eqref{eq:phi_nn}. Consequently, $\Delta(t)$ is a bounded function.	
	
	Let's show that the $\Delta(t)$ is persistently exciting, i.e. there exist the positive constants $T$, $t_0$, $a \in \mathbb{R}_+$ such that 
	\begin{align}
	\label{eq:pe}
	\int_{t}^{t+T} \Delta^2(r) \text{d} r \geq a,\quad \forall t > t_0.
	\end{align}
	
	Now let's write the function $\Delta(t)$ using equations~\eqref{eq:Delta}, \eqref{eq:phi_extended} è \eqref{eq:phi_n}--\eqref{eq:phi_nn}:
	\begin{align}
	\nonumber
	\Delta(t) :=& \mbox{det}\{\varepsilon \Phi_f(t)\} = 
	\varepsilon^n \cdot \begin{vmatrix}
	\Phi_1^T(t)\\
	\Phi_2^T(t)\\
	\ldots \\
	\Phi_n^T(t)
	\end{vmatrix} = \\
	\label{eq:Delta_re}
	=& \varepsilon^n \cdot \underbrace{\begin{vmatrix}
		H\{\varphi_1(t)\} &  H\{\varphi_2(t)\} & \cdots & H\{\varphi_n(t)\} \\
		H^2\{\varphi_1(t)\} &  H^2\{\varphi_2(t)\} & \cdots & H^2\{\varphi_n(t)\} \\
		\vdots & \vdots & \ddots & \vdots \\
		H^n\{\varphi_1(t)\} &  H^n\{\varphi_2(t)\} & \cdots & H^n\{\varphi_n(t)\}
		\end{vmatrix}}_C.
	\end{align}
	
	The functions $\varphi_1(t), \ \dots, \ \varphi_n(t)$ are linear independent, according the proposition~\ref{pr:phi}. Then the functions $H^i\{\varphi_1(t), \ \dots, H^i\{\varphi_n(t)$ are also linearly independent due to the replacement of $t_i:= t - ih $. Thus the columns of the matrix $C$ are linearly independent. 
	
	Note that the components of the matrix $C$ are periodic functions, then the determinant $\Delta(t)$ is also a periodic function with the period $T_0$.
	
	Due to the continuity of $\Delta(t)$ and $\text{rang}C = n$ for $\forall t > n(h + d)$, for some $t_1>n(h + d)$ there is $\hat{a} \in \mathbb{R}_+$ that following inequality is true: 
	\begin{align}
	\label{eq:pe_proof}
	\int_{t_1}^{t_1+T_0} \Delta^2(r) \text{d} r \geq \hat{a}.
	\end{align}
	
	Then, the condition~\eqref{eq:pe} is satisfied for $\forall t > n(h + d)$ and $a:=\hat{a}$, $T:=T_0$ due to the periodicity of the subintegral function.
\end{proof}

Replacing $\tilde{\theta}_i(t)$ with $\theta_i-\hat{\theta}_i(t)$ gives the following relation
\begin{align}
	\label{eq:error_solution_opened}
	\theta_i - \hat{\theta}_i(t) & =\theta_i W(t)-\hat{\theta}_i(0)W(t),
\end{align}
where $W(t):=\text{e}^{-\gamma_i\int_{0}^{t}\Delta^{2}(r )\text{d} r}$.

Now from \eqref{eq:error_solution_opened} parameter $\theta_i$ can be
explicitly found
\begin{align}
	\theta_i-\theta_i W(t) & =\hat{\theta}_i(t)-\hat{\theta}_i(0)W(t),\\
	\theta_i\left(1-W(t)\right) & =\hat{\theta}_i(t)-\hat{\theta}_i(0)W(t),
\end{align}
and for some $t_{ft} > n(h + d)$:
\begin{align}
	\hat{\theta}_i^{ft}(t) =& \frac{1}{1-W(t)}\left(\hat{\theta}_i(t)-\hat{\theta}_i(0)W(t)\right), \quad t \geq t_{ft}.
	\label{eq:theta_explicit}
\end{align}
where $\hat{\theta}_i^{ft}(t)$ is the finite-time estimation of the parameter $\theta_i$, $i=\overline{1,n}$.

Using parameter estimates~\eqref{eq:theta_explicit}, we can reconstruct $c_i=\cos(\omega_i h)$ using Vieta's formulas \eqref{eq:theta_n}:
\begin{align}
\label{eq:theta_est}
\hat{\theta}_1^{ft}(t) &= -\hat{c}^{ft}_1(t) - \hat{c}^{ft}_2(t) -\ldots - \hat{c}^{ft}_n(t), \\
\hat{\theta}_2^{ft}(t) &= \hat{c}^{ft}_1(t) \hat{c}^{ft}_2(t) + \hat{c}^{ft}_1(t) \hat{c}^{ft}_3(t) +\ldots + \hat{c}^{ft}_{n-1}(t) \hat{c}^{ft}_n(t), \\
&\cdots \nonumber\\
\label{eq:theta_estn}
\hat{\theta}_n^{ft}(t) &= (-1)^n \hat{c}^{ft}_1(t) \hat{c}^{ft}_2(t) \cdot \ldots \cdot \hat{c}^{ft}_n(t).
\end{align}

Finally, using identity $c_i = \cos\omega_i t$, we obtain the desired frequency estimates~\eqref{eq:objective} as
\begin{align}
	\label{eq:omega}
	\hat{\omega}^{ft}_i(t) = \frac{1}{h}\arccos(\hat{c}^{ft}_i(t)), \quad i=\overline{1,n},
\end{align}
at the predefined time $t_{ft}$.

An alternative method of building frequency estimates $\hat{\omega}^{ft}_i(t)$ using $\hat{c}^{ft}_i$ without the inverse trigonometric function $\arccos \{\cdot\}$ is described in the paper~\cite{vedyakov2018frequency}.

The estimation algorithm~\eqref{eq:theta_explicit} guarantees the convergence of frequency estimates  $\hat{\omega}_i^{ft}(t)$ to the true values $\omega_i$, $i=\overline{1,n}$ for the finite time $t_{ft}$.

\begin{remark}
	\label{re:grad}
	From expressions~\eqref{eq:grad}, \eqref{eq:theta_est}--\eqref{eq:theta_estn}, \eqref{eq:omega} can be obtained the estimates $\hat{\omega}^{grad}_i(t)$, which provides exponential convergence of the estimation errors to zero~\cite{vedyakov2018frequency}.
\end{remark}


\section{Simulation Results}
\label{sec:example}

In this section, we present simulation results that illustrate the efficiency of the proposed estimation algorithm. All simulations have been performed in MATLAB Simulink.

We compare four frequency estimation algorithms:
\begin{itemize}
	\item the proposed finite time algorithm and denote the frequency estimates obtained by this method as~$\hat{\omega}^{ft}_i(t)$;
	\item the gradient method with DREM referred to the remark~\ref{re:grad}, denote the frequency estimates as $\hat{\omega}^{grad}_i(t)$; 
	\item an adaptive observer approach~\cite{pin2019identification} with frequency estimates $\hat{\omega}^{obs}_i(t)$; 
	\item a filtering approach~\cite{pyrkin2015Estimation} and denote the frequency estimates obtained by this method as~$\hat{\omega}^{filt}_i(t)$.
\end{itemize}


\subsection{The case of a two harmonics in noiseless scenario}

In this example, the input signal is composed by two sinusoids:
\begin{align*}
y(t) = \sin 2t + \cos 3t.
\end{align*}

All the methods are initialized with the same initial condition $\hat{\omega}(0) = [2 \ 5]^T$ and are tuned to ensure similar convergence speed in the absence of disturbance. More specifically, the parameters of the proposed estimator and the gradient method with DREM are chosen as: $h = 0.1$~s, $d = 0.13$~s, $\varepsilon = 100$, $\gamma_1 = \gamma_2 = 0.005$. The method in~\cite{pin2019identification} is tuned with: $\bar{g} = 4$, $r = 1.2$, $\mu = 10$, $\Lambda = [1 \ 3]$. The tuning parameters of the method in~\cite{pyrkin2015Estimation} are set as $\lambda = 5$, $\omega_0 = 0.5$, $k_1 = 1$, $k_2 = 5$. The behavior of the estimators is shown in Fig.~\ref{fig:omega}. As it can be seen, all the methods succeed in detecting the frequencies with similar transient time.

\begin{figure}
	\begin{center}
		\begin{subfigure}[t]{\linewidth}
			\includegraphics[width=\linewidth]{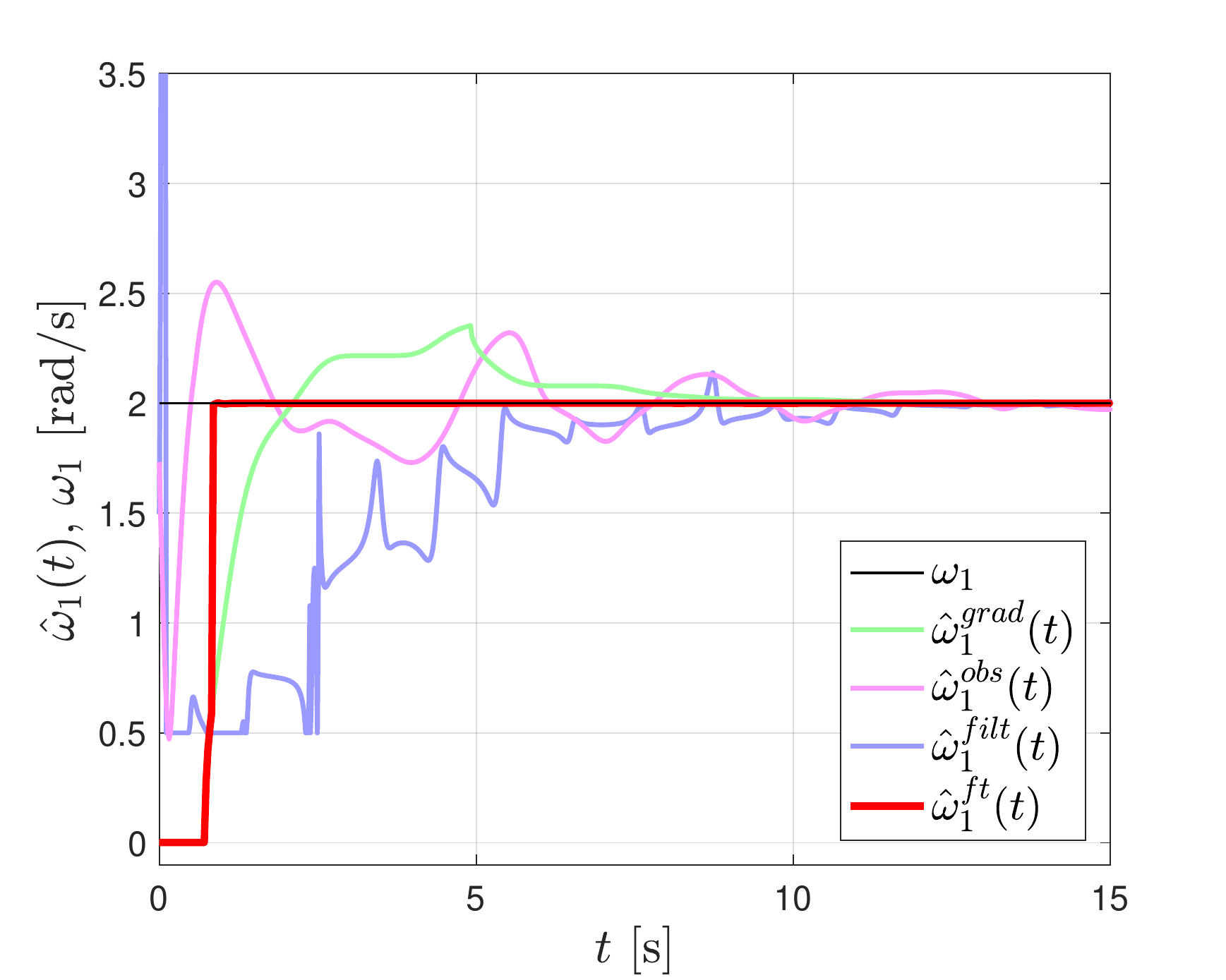}
			\caption{Time behavior of the estimated frequency $\omega_1$}
			\label{fig:om1}
		\end{subfigure}
		
		\begin{subfigure}[t]{\linewidth}
			\includegraphics[width=\linewidth]{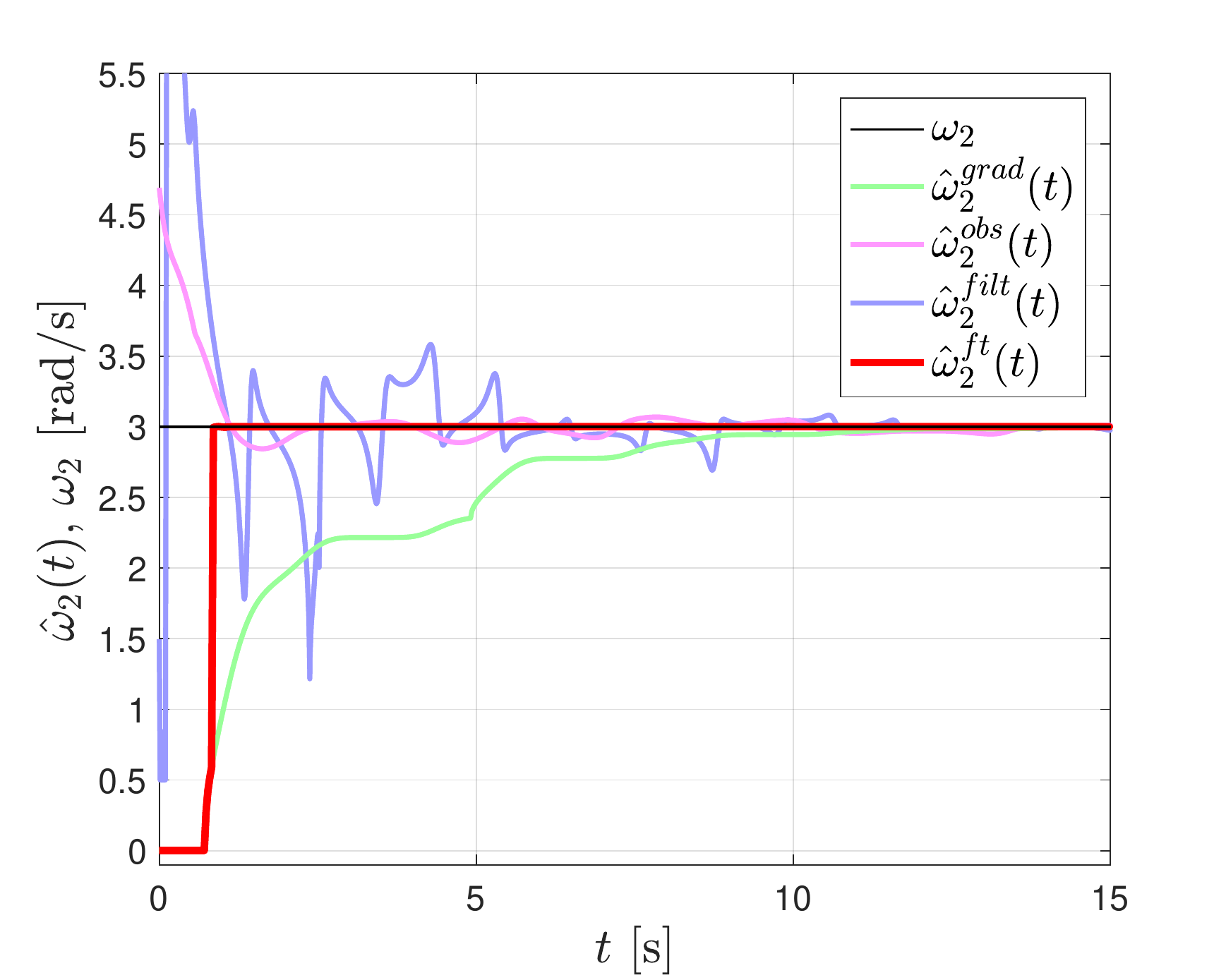}
			\caption{Time behavior of the estimated frequency $\omega_2$}
			\label{fig:om2}
		\end{subfigure}
		
		\caption{Simulation results for the two harmonic case for an unperturbed multi-sinusoidal signal}
		\label{fig:omega}
	\end{center}
\end{figure}

\subsection{The case of a two harmonics in noisy scenario}

In this example, the four algorithms are compared in presence of a  disturbance $d(t)$, which is added to the measurement: $\hat{y}(t) = \sin 2t + \sin 3t + d(t)$.

Let's consider the influence of harmonic disturbance $d(t) = 0.25\sin 15t$ on the obtained frequency estimates. The parameters of the estimation methods are the following:
\begin{itemize}
	\item $\hat{\omega}^{ft}(t)$, $\hat{\omega}^{grad}(t)$: $h = 1$, $d = 0.37$, $\varepsilon = 0.1$, $\gamma_1 = \gamma_2 = 1$;
	\item $\hat{\omega}^{obs}(t)$: $\bar{g} = 3$, $r = 1.2$, $\mu = 2$, $\Lambda = [1 \ 5]$;
	\item $\hat{\omega}^{filt}(t)$: $\lambda = 0.7$, $\omega_0 = 0.5$, $k_1 = 20$, $k_2 = 75$.
\end{itemize}

The behavior of the estimators is shown in Fig.~\ref{fig:omega_sin}. 
\begin{figure}
	\begin{center}
		\begin{subfigure}[t]{\linewidth}
			\includegraphics[width=\linewidth]{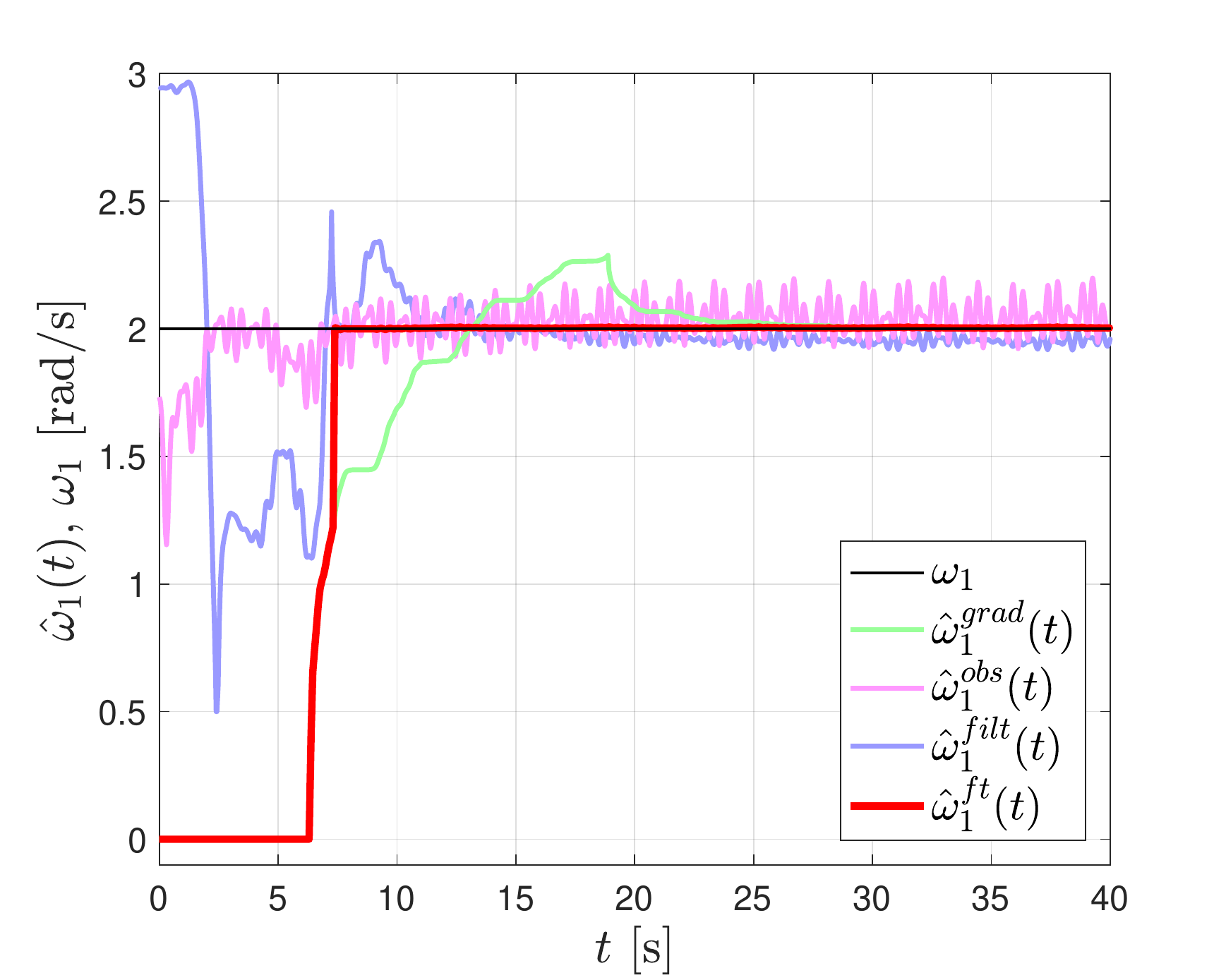}
			\caption{Time behavior of the estimated frequency $\omega_1$}
			\label{fig:om1_sin}
		\end{subfigure}
		
		\begin{subfigure}[t]{\linewidth}
			\includegraphics[width=\linewidth]{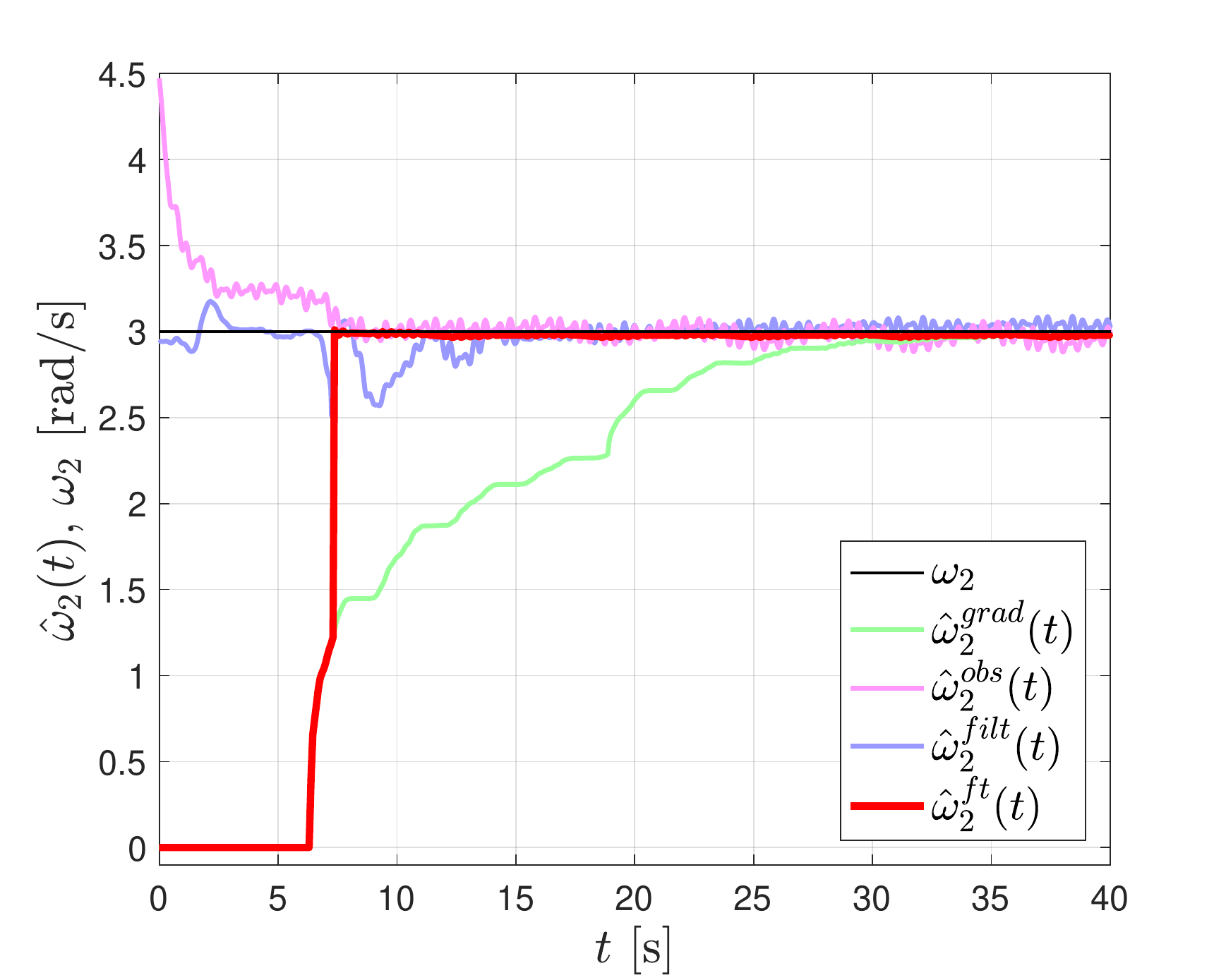}
			\caption{Time behavior of the estimated frequency $\omega_2$}
			\label{fig:om2_sin}
		\end{subfigure}
		
		\caption{Simulation results for the two harmonic case in harmonic noisy scenario}
		\label{fig:omega_sin}
	\end{center}
\end{figure}

As another type of external perturbation, let us consider the random noise. Additive noise $d(t)$ is simulated as a uniformly distributed process ranging within $[-0.2, 0.2]$ and sample time $0.001$ s. The parameters of the estimation methods in this case are the following:
\begin{itemize}
	\item $\hat{\omega}^{ft}(t)$, $\hat{\omega}^{grad}(t)$: $h = 0.6$, $d = 0.4$, $\varepsilon = 0.1$, $\gamma_1 = \gamma_2 = 1$;
	\item $\hat{\omega}^{obs}(t)$: $\bar{g} = 4$, $r = 1.2$, $\mu = 3$, $\Lambda = [1 \ 4]$;
	\item $\hat{\omega}^{filt}(t)$: $\lambda = 2$, $\omega_0 = 1$, $k_1 = 0.2$, $k_2 = 0.5$.
\end{itemize}
The behavior of the estimators is shown in Fig.~\ref{fig:omega_noise}.
\begin{figure}
	\begin{center}
		\begin{subfigure}[t]{\linewidth}
			\includegraphics[width=\linewidth]{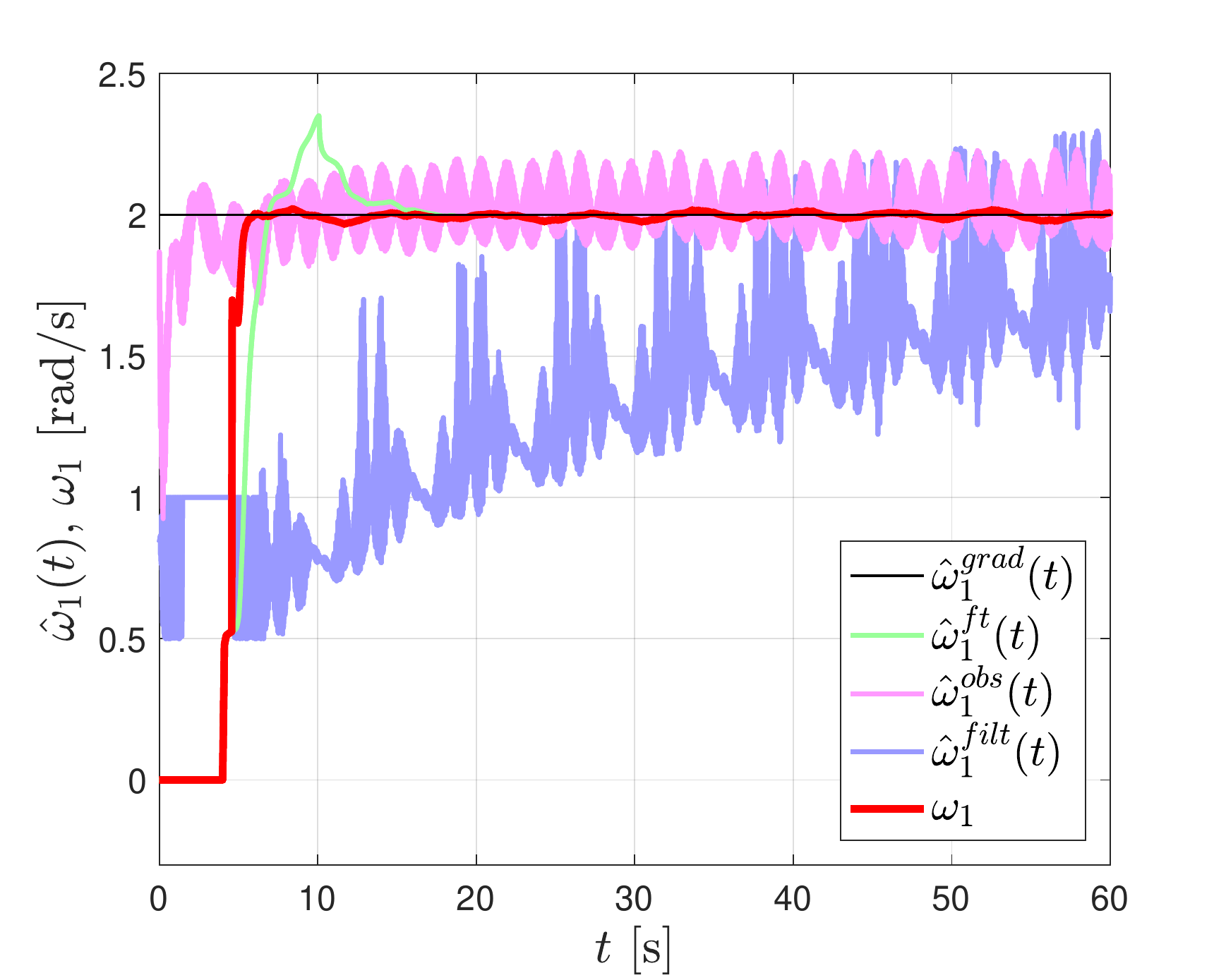}
			\caption{Time behavior of the estimated frequency $\omega_1$}
			\label{fig:om1_noise}
		\end{subfigure}
		
		\begin{subfigure}[t]{\linewidth}
			\includegraphics[width=\linewidth]{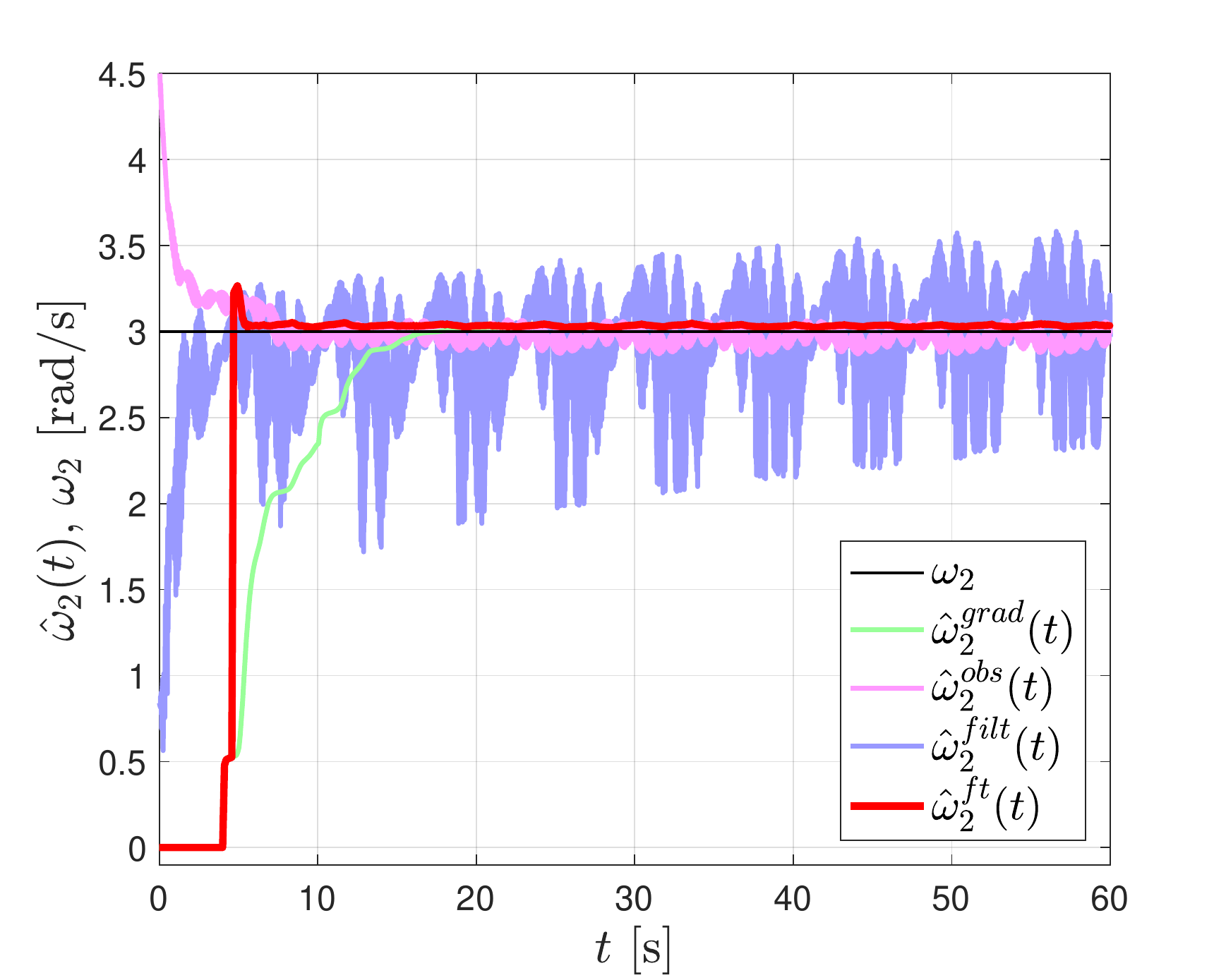}
			\caption{Time behavior of the estimated frequency $\omega_2$}
			\label{fig:om2_noise}
		\end{subfigure}
		
		\caption{Simulation results for the two harmonic case in random noisy scenario}
		\label{fig:omega_noise}
	\end{center}
\end{figure}

As it can be noticed from Fig.~\ref{fig:omega_sin}--\ref{fig:omega_noise}, all the methods are capable to track both the frequencies in noisy scenario. However, the proposed finite time algorithm is more susceptible to high frequency and random disturbances than the other three methods. The proposed algorithm still provides a finite-time estimate in the presence of additive disturbance.


\subsection{The case of step-wise frequency variation}

Let us consider a multi-sinusoidal signal whose components exhibit a step-wise frequency variation:
\begin{align*}
y(t) &= \sin \omega_1(t) + \cos \omega_2(t),\\
\omega_1(t) &= \begin{cases}
1.8, & t < 30 \ s,\\
2, & t \geq 30 \ s,
\end{cases} \quad 
\omega_2(t) = \begin{cases}
3.2, & t < 30 \ s,\\
3, & t \geq 30 \ s.
\end{cases}
\end{align*}

The parameters are the following:
\begin{itemize}
	\item $\hat{\omega}^{ft}(t)$, $\hat{\omega}^{grad}(t)$: $h = 0.7$, $d = 0.4$, $\varepsilon = 0.1$, $\gamma_1 = \gamma_2 = 1$;
	\item $\hat{\omega}^{obs}(t)$: $\bar{g} = 4$, $r = 1.2$, $\mu = 10$, $\Lambda = [1 \ 4]$;
	\item $\hat{\omega}^{filt}(t)$: $\lambda = 2$, $\omega_0 = 0.5$, $k_1 = 1$, $k_2 = 5$.
\end{itemize}

The results are reported in Fig.~\ref{fig:omega_swit}. As it can be noticed, all four methods favorably deal with the frequency change and lead to comparable stationary behavior. However, after the change of frequencies of the measured signal the proposed method  can not produce frequencies estimates at the finite time. The estimation plots of $\hat{\omega}_i^{ft}(t)$ and $\hat{\omega}_i^{grad}(t)$ match and there is an exponential convergence.This aspect results from using  $\hat{\theta}_i(0)$ in the expression \eqref{eq:theta_explicit}.

\begin{figure}
	\begin{center}
		\begin{subfigure}[t]{\linewidth}
			\includegraphics[width=\linewidth]{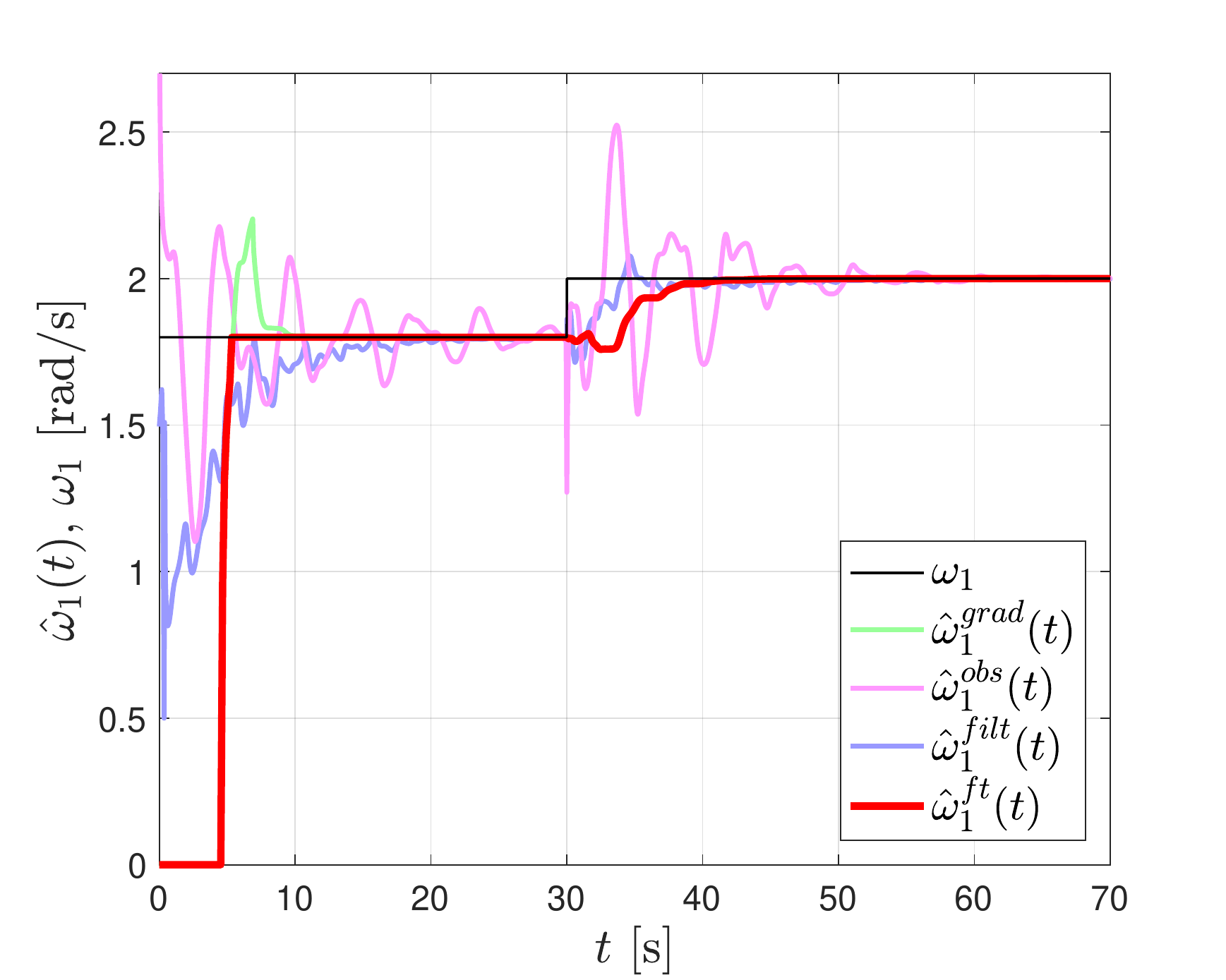}
			\caption{Time behavior of the estimated frequency $\omega_1$}
			\label{fig:om1_switch}
		\end{subfigure}
		
		\begin{subfigure}[t]{\linewidth}
			\includegraphics[width=\linewidth]{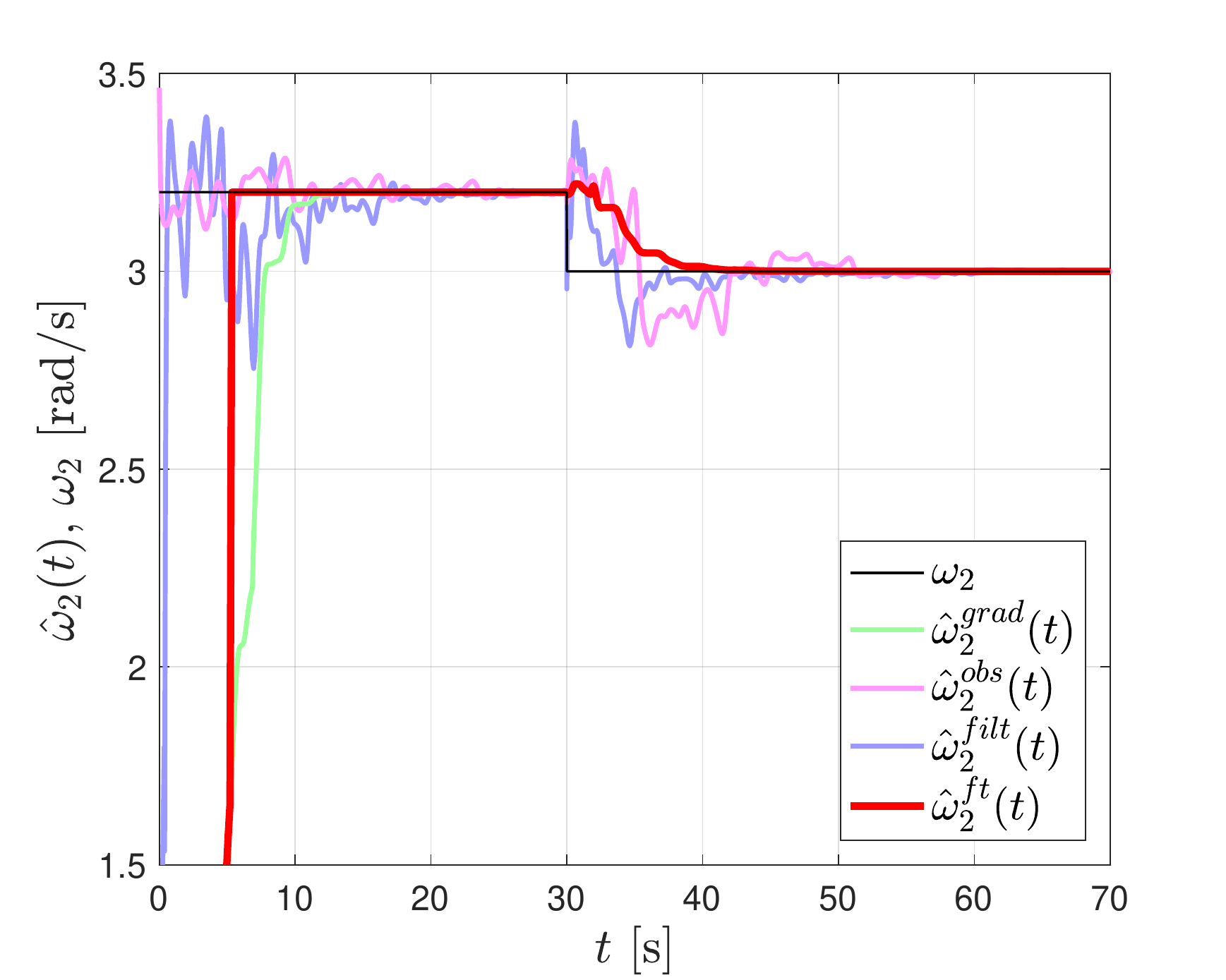}
			\caption{Time behavior of the estimated frequency $\omega_2$}
			\label{fig:om2_switch}
		\end{subfigure}
		
		\caption{Simulation results for an unperturbed two harmonic case with a step frequency change at 30 s}
		\label{fig:omega_swit}
	\end{center}
\end{figure}

\section{Conclusions}
\label{sec:conclusions}

The finite-time frequency estimation problem for an unbiased multi-sinusoidal signal is considered. The presented result extends the previously proposed estimator for a pure sinusoidal signal~\cite{med2017sin}. The $n$-th order linear regression model, where unknown parameters depend on signal frequencies, is constructed using transport delay operators.

Applying DREM, we split the model into scalar regressions. Unknown parameters are estimated by standard gradient method. Using obtained estimates, which converge exponentially to the true values, we calculate at a predefined finite time another estimate using scheme proposed in~\cite{ortega2019fto}.

The relation between estimated parameters and frequencies is quite complicated. The estimated parameters are $n$-th order polynomial coefficients. Using these values, we find polynomial roots, which requires numerical solver for $n \geq 5$. Using $\arccos$ function and dividing by the delay, we reconstruct the frequencies. 

The method provides estimates at a predefined time, so it is independent of frequency values and the tuning gain of the internally used gradient descent method. In the noise scenario, it allows increasing estimation quality without trade-off between noise sensitivity and estimation duration. Decreasing tuning gain in the gradient descent method, we decrease sensitivity to the measurement noise. Without the finite-time estimation scheme, it dramatically increases estimation duration. 


\bibliographystyle{IEEEtran}
\bibliography{IEEEabrv,references}

\end{document}